\newtheorem{theorem}{Theorem}
\newcommand{\cA}{\mathcal{A}}
\newcommand{\cB}{\mathcal{B}}
\newcommand{\cE}{\mathcal{E}}
\newcommand{\cX}{\mathcal{X}}
\newcommand{\cY}{\mathcal{Y}}
\newcommand{\cL}{\mathcal{L}}
\newcommand{\cD}{\mathcal{D}}
\newcommand{\cG}{\mathcal{G}}
\newcommand{\cS}{\mathcal{S}}
\newcommand{\ip}[2]{\langle #1 , #2\rangle}
\newcommand{\ket}[1]{\ensuremath{\lvert #1 \rangle}} %
\newcommand{\bra}[1]{\ensuremath{\langle #1 \rvert}} %
\newcommand{\op}{\operatorname}
\newcommand{\tr}{\op{Tr}}
\newcommand{\I}{\mathds{1}}
\begin{document}

\title{On the complementary quantum capacity of the depolarizing channel}
\date{October 2, 2018}
\author{Debbie Leung}
\affiliation{Institute for Quantum Computing and Department of Combinatorics
  and Optimization, University of Waterloo}
\orcid{orcid.org/0000-0003-3750-2648}
\author{John Watrous}
\affiliation{Institute for Quantum Computing and School of Computer Science,
    University of Waterloo}
\orcid{orcid.org/0000-0002-4263-9393}

\maketitle

\begin{abstract}
  The qubit depolarizing channel with noise parameter $\eta$ transmits an input
  qubit perfectly with probability $1-\eta$, and outputs the completely mixed
  state with probability $\eta$.
  We show that its \emph{complementary channel} has \emph{positive} quantum
  capacity for all $\eta>0$.
  Thus, we find that there exists a single parameter family of channels having
  the peculiar property of having positive quantum capacity even when the
  outputs of these channels approach a fixed state independent of the input. 
  Comparisons with other related channels, and implications on the
  difficulty of studying the quantum capacity of the depolarizing
  channel are discussed.
\end{abstract}

\section{Introduction}

It is a fundamental problem in quantum information theory to determine the
capacity of quantum channels to transmit quantum information.
The \emph{quantum capacity} of a channel is the optimal rate at which one can
transmit quantum data with high fidelity through that channel when an
asymptotically large number of channel uses is made available.

In the classical setting, the capacity of a classical channel to transmit
classical data is given by Shannon's noisy coding theorem \cite{Shannon48}.
Although the error correcting codes that allow one to approach the
capacity of a channel may involve increasingly large block lengths, the capacity
expression itself is a simple, \emph{single letter formula} involving an 
optimization over input distributions maximizing the input/output mutual
information over \emph{one} use of the channel.

In the quantum setting, analyses inspired by the classical setting
have been performed \cite{Lloyd97,Shor02,Devetak05}, and an expression
for the quantum capacity has been found.  However, the capacity
expression involves an optimization similar to the classical setting
not for a single channel use, but for an increasingly large number of
channel uses.  The optimum value for $n$ copies of the channel leads
to the so-called \emph{$n$-shot coherent information} of the channel,
but little is known in general about how the $n$-shot coherent
information grows with $n$.  (Reference \cite{DiVincenzoSS98} showed
that the coherent information can be superadditive for some channels,
so the one-shot coherent information does not generally provide an
expression for the quantum capacity of a quantum channel.)
Consequently, the quantum capacity is unknown for many quantum
channels of interest.

Furthermore, \cite{DiVincenzoSS98} showed that the $n$-shot coherent
information of a channel can increase from zero to a positive quantity
as $n$ increases, and reference \cite{Cubitt2015unbounded} showed that
given any positive integer $n$, there is a channel whose $n$-shot
coherent information is zero but whose quantum capacity is
nevertheless positive.  Moreover, no algorithm is known to determine
if a quantum channel has zero or positive quantum capacity.  On the
other hand, some partial characterizations are known 
\cite{bennett1997-erasure,bruss1998-symmetric,Peres96,
2000-Horodecci-BE-channel,Smith-Smolin-2012-incapacity}.
For several well-known families of quantum channels that can be characterized by
noise parameters, the quantum capacity is proved to be zero within moderately
noisy regimes, well before the channel output becomes constant and independent
of the input.

In this paper, we show that any complementary channel to the qubit
depolarizing channel has positive quantum capacity (in fact, positive one-shot
coherent information) unless the output is exactly constant.
This is in sharp contrast with the superficially similar qubit depolarizing
channel and erasure channel, whose capacities vanish when the analogous noise
parameter is roughly half-way between the completely noiseless and noisy
extremes.
Prior to this work, it was not known (to our knowledge) that a family of quantum
channels could retain positive quantum capacity while approaching a channel
whose output is a fixed state, independent of the channel input.
We hope this example concerning how the quantum capacity does not vanish 
will shed light on a better characterization of when a channel has 
no quantum capacity. 

Another consequence of our result concerns the quantum capacity of low-noise
depolarizing channels.
Watanabe \cite{Watanabe2012} showed that if a given channel's complementary
channels have no quantum capacity, then the original channel must have quantum
capacity equal to its private classical capacity.
Furthermore, if the complementary channels have no classical private
capacity, then the quantum and private capacities are given by the one-shot 
coherent information.
Our result shows that Watanabe's results cannot be applied to the qubit
depolarizing channel. 
Very recently, \cite{LeditzkyLS17} established tight upper bounds on the
difference between the one-shot coherent information and the quantum and 
private capacities of a quantum channel, although whether or not the 
conclusion holds exactly remains open.  

In the remainder of the paper, we review background information concerning
quantum channels, quantum capacities, and relevant results on a few commonly
studied families of channels, and then prove our main results.

\section{Preliminaries}

Given a sender (Alice) and a receiver (Bob), one typically models
quantum communication from Alice to Bob as being sent through a
quantum channel $\Phi$.  We will associate the input and output
systems with finite-dimensional complex Hilbert spaces $\cA$ and
$\cB$, respectively.  In general, we write $\cL(\cX,\cY)$ to denote
the space of linear operators from $\cX$ to $\cY$, for
finite-dimensional complex Hilbert spaces $\cX$ and $\cY$, and we
write $\cL(\cX)$ to denote $\cL(\cX,\cX)$.  
For two operators $X,Y \in \cL(\cX)$, we use $\ip{X}{Y}$ to denote the
Hilbert-Schmidt inner product $\tr(X^{\ast}Y)$, where $X^{\ast}$ denotes the
adjoint of $X$.
We also write $\cD(\cX)$ to denote the set of positive semidefinite,
trace one operators (i.e., density operators) acting on $\cX$.

A quantum channel $\Phi$ from Alice to Bob is a completely positive,
trace-preserving linear map of the form
\begin{equation}
  \Phi:\cL(\cA)\rightarrow\cL(\cB) \,.
\end{equation}
There exist several well-known characterizations of quantum channels.  
The first one we need is given by the Stinespring representation, in which
a channel $\Phi$ is described as
\begin{equation}
  \label{eq:Stinespring-Phi}
  \Phi(\rho) = \tr_{\cE} (A \rho A^{\ast}),
\end{equation}
where $\cE$ is a finite-dimensional complex Hilbert space representing
an ``environment'' system, $A\in\cL(\cA,\cB\otimes\cE)$ is an isometry (i.e., a
linear operator satisfying $A^{\ast} A = \I$), and
$\tr_{\cE}:\cL(\cB\otimes\cE)\rightarrow\cL(\cB)$ denotes the partial trace
over the space $\cE$.
In this context, the isometry $A$ is sometimes known as an isometric extension
of $\Phi$, and is uniquely determined up to left multiplication by an isometry
acting on $\cE$.

For a channel $\Phi$ with a Stinespring representation 
\eqref{eq:Stinespring-Phi}, the channel $\Psi$ of the form
$\Psi:\cL(\cA)\rightarrow\cL(\cE)$ that is given by
\begin{equation}
  \Psi(\rho) = \tr_{\cB} (A \rho A^{\ast})
\end{equation}
is called a \emph{complementary channel} to $\Phi$.  
Following the degree of freedom in the Stinespring representation, a
complementary channel of $\Phi$ is uniquely determined up to an isometry on the
final output.
A channel $\Psi$ that is complementary to $\Phi$ may be viewed as representing
information that leaks to the environment when $\Phi$ is performed.

The second type of representation we need is a Kraus representation
\begin{equation}
  \Phi(\rho) = \sum_{k = 1}^N A_k \rho A_k^{\ast} \,,
\end{equation}
where the operators $A_1,\ldots,A_N \in\cL(\cA,\cB)$ (called Kraus operators)
satisfy
\begin{equation}
  \sum_{k = 1}^N A_k^{\ast} A_k = \I \,.
\end{equation}

The \emph{coherent information} of a state $\rho\in\cD(\cA)$ through a channel
$\Phi:\cL(\cA)\rightarrow\cL(\cB)$ is defined as
\begin{equation}
  \op{I}_{\textup{\tiny C}}(\rho;\Phi)
  = \op{H}(\Phi(\rho)) - \op{H}(\Psi(\rho)) \,,
\end{equation}
for any channel $\Psi$ complementary to $\Phi$, where 
$\op{H}(\sigma) = - \tr(\sigma \log \sigma)$ denotes the von~Neumann entropy of
a density operator $\sigma$.
Note that the coherent information is independent of the choice of the
complementary channel $\Psi$.
The coherent information of $\Phi$ is given by the maximum over all inputs 
\begin{equation}
\op{I}_{\textup{\tiny C}}(\Phi) 
= \max_{\rho \in \cD(\cA)} \op{I}_{\textup{\tiny C}}(\rho;\Phi) \,.
\end{equation}
The $n$-shot coherent information of $\Phi$ is 
$\op{I}_{\textup{\tiny C}}(\Phi^{\otimes n})$.  
The \emph{quantum capacity theorem} \cite{Lloyd97,Shor02,Devetak05}
states that the quantum capacity of $\Phi$ is given by the expression
\begin{equation}
  \op{Q}(\Phi) = \lim_{n\rightarrow\infty} 
  \frac{\op{I}_{\textup{\tiny C}}(\Phi^{\otimes n})}{n} \,.
\label{eq:qcap}
\end{equation}
The $n$-shot coherent information
$\op{I}_{\textup{\tiny C}}(\Phi^{\otimes n})$ of a channel $\Phi$ is trivially
lower-bounded by $n$ times the coherent information
$\op{I}_{\textup{\tiny C}}(\Phi)$, and therefore the coherent information
$\op{I}_{\textup{\tiny C}}(\Phi)$ provides a lower-bound on the quantum
capacity of $\Phi$.

The \emph{qubit depolarizing channel} with noise parameter $\eta$, 
denoted by $\Phi_\eta$, takes a qubit state $\rho \in \cD(\mathbb{C}^2)$
to itself with probability $1-\eta$, and replaces it with a random output with
probability $\eta$:
\begin{equation}
  \Phi_{\eta}(\rho) = (1 - \eta) \, \rho + \eta \, \frac{\I}{2} \,. 
\end{equation}
One Kraus representation of $\Phi_{\eta}$ is
\begin{equation}
  \Phi_{\eta}(\rho) = (1 - \varepsilon) \, \rho + \frac{\varepsilon}{3}
  \bigl(\sigma_1 \, \rho \, \sigma_1 + \sigma_2 \, \rho \, \sigma_2 
  + \sigma_3 \, \rho \, \sigma_3 \bigr),
\end{equation}
where $\varepsilon = 3\eta/4$, and 
\begin{equation}
\sigma_1 = \begin{pmatrix}
    0 & 1\\ 1 & 0
  \end{pmatrix},\quad
  \sigma_2 = \begin{pmatrix}
    0 & -i\\ i & 0
  \end{pmatrix},
  \quad\text{and}\quad
  \sigma_3 = \begin{pmatrix}
    1 & 0\\ 0 & -1
  \end{pmatrix}
\end{equation}
denote the Pauli operators.
A Stinespring representation of $\Phi_{\eta}$ that corresponds naturally to this
Kraus representation is
\begin{equation}
  \Phi_\eta(\rho) = \tr_{\cE} 
  \bigl(A_{\varepsilon} \rho A_{\varepsilon}^{\ast}\bigr)
\end{equation}
for the isometric extension 
\begin{equation}
  A_{\varepsilon} = \sqrt{1 - \varepsilon}\, \I \otimes \ket{0}
  + \sqrt{\frac{\varepsilon}{3}} \bigl( \sigma_1 \otimes \ket{1}
  + \sigma_2 \otimes \ket{2}
  + \sigma_3 \otimes \ket{3}\bigr).
\end{equation}
The complementary channel $\Psi_{\eta}$ to $\Phi_{\eta}$ determined by this
Stinespring representation is given by
\begin{equation}
  \Psi_{\eta}(\rho) = 
  \begin{pmatrix}
    1- \varepsilon & 
    \sqrt{\frac{\varepsilon(1-\varepsilon)}{3}} \ip{\sigma_1}{\rho} & 
    \sqrt{\frac{\varepsilon(1-\varepsilon)}{3}} \ip{\sigma_2}{\rho} & 
    \sqrt{\frac{\varepsilon(1-\varepsilon)}{3}} \ip{\sigma_3}{\rho} \\[2mm]
    \sqrt{\frac{\varepsilon(1-\varepsilon)}{3}} \ip{\sigma_1}{\rho} & 
    \frac{\varepsilon}{3} &
    - \frac{i \varepsilon}{3} \ip{\sigma_3}{\rho} &
    \frac{i \varepsilon}{3} \ip{\sigma_2}{\rho} \\[2mm]
    \sqrt{\frac{\varepsilon(1-\varepsilon)}{3}} \ip{\sigma_2}{\rho} & 
    \frac{i \varepsilon}{3} \ip{\sigma_3}{\rho} &
    \frac{\varepsilon}{3} &
    - \frac{i \varepsilon}{3} \ip{\sigma_1}{\rho} \\[2mm]
    \sqrt{\frac{\varepsilon(1-\varepsilon)}{3}} \ip{\sigma_3}{\rho} & 
    - \frac{i \varepsilon}{3} \ip{\sigma_2}{\rho} &
    \frac{i \varepsilon}{3} \ip{\sigma_1}{\rho} &
    \frac{\varepsilon}{3}
  \end{pmatrix}\!.
\label{eq:epolarizing}
\end{equation}
We call this complementary channel the \emph{epolarizing channel}.
Note that when $\eta \approx 0$, the channel $\Phi_\eta$ is nearly noiseless,
while $\Psi_\eta$ is very noisy, and the opposite holds when $\eta \approx 1$.

We will use the expressions above to calculate a lower-bound on the
coherent information $\op{I}_{\textup{\tiny C}}(\Psi_{\eta})$, which
provides a lower-bound on the quantum capacity of the epolarizing
channel $\Psi_\eta$.  

\section{Main result} 

\begin{theorem}
  \label{theorem:positive-complementary-coherent-information}
  Let $\Phi_{\eta}$ be the qubit depolarizing channel with noise
  parameter $\eta \in [0,1]$. Any complementary channel to
  $\Phi_{\eta}$ has positive coherent information when $\eta > 0$.
\end{theorem}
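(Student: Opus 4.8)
The plan is to work with the explicit epolarizing channel $\Psi_\eta$ of \eqref{eq:epolarizing}; since the coherent information is independent of the choice of complementary channel, it suffices to show $\op{I}_{\textup{\tiny C}}(\Psi_\eta) > 0$. Because complementarity is symmetric, $\Phi_\eta$ is itself a complementary channel to $\Psi_\eta$, so
\begin{equation}
  \op{I}_{\textup{\tiny C}}(\rho;\Psi_\eta)
  = \op{H}(\Psi_\eta(\rho)) - \op{H}(\Phi_\eta(\rho)),
\end{equation}
and the goal reduces to producing a single qubit state $\rho$ with $\op{H}(\Psi_\eta(\rho)) > \op{H}(\Phi_\eta(\rho))$.

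I would try the one-parameter family $\rho_r = \tfrac12(\I + r\,\sigma_3)$, $r\in[0,1]$, for which $\ip{\sigma_1}{\rho_r} = \ip{\sigma_2}{\rho_r} = 0$ and $\ip{\sigma_3}{\rho_r} = r$. Then $\Phi_\eta(\rho_r)$ has eigenvalues $\tfrac12(1\pm(1-\eta)r)$, so $\op{H}(\Phi_\eta(\rho_r)) = \op{H}_2\bigl(\tfrac12(1-(1-\eta)r)\bigr)$ with $\op{H}_2$ the binary entropy, while the matrix \eqref{eq:epolarizing} block-diagonalizes into a $2\times 2$ block on the indices $\{1,2\}$ with eigenvalues $\tfrac{\varepsilon}{3}(1\pm r)$ and a $2\times 2$ block on $\{0,3\}$ whose eigenvalues $\lambda_\pm$ are the roots of $\lambda^2 - (1-\tfrac{2\varepsilon}{3})\lambda + \tfrac{\varepsilon(1-\varepsilon)}{3}(1-r^2)$, where $\varepsilon = 3\eta/4$. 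This makes $f(r) := \op{I}_{\textup{\tiny C}}(\rho_r;\Psi_\eta)$ completely explicit.

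The heart of the argument is the behavior of $f$ as $r\to 1^-$. Since $\rho_1$ is pure, the two outputs of the channel on $\rho_1$ are marginals of a common pure state and hence isospectral, so $f(1) = 0$. As $r\to 1$, exactly two eigenvalues of $\Psi_\eta(\rho_r)$ tend to zero, namely $\tfrac{\varepsilon}{3}(1-r)$ and $\lambda_- = \tfrac{2\varepsilon(1-\varepsilon)}{3-2\varepsilon}(1-r) + O\bigl((1-r)^2\bigr)$, and each vanishes \emph{linearly} in $1-r$ with a strictly positive leading coefficient — this is exactly where $0<\varepsilon\le 3/4$, i.e.\ $0<\eta\le1$, enters — while the remaining two eigenvalues stay bounded away from zero. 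Writing $\op{H}(\Psi_\eta(\rho_r)) = -\sum_i p_i(r)\log p_i(r)$, we have $\tfrac{d}{dr}\op{H}(\Psi_\eta(\rho_r)) = -\sum_i p_i'(r)\log p_i(r)$ (the remaining terms cancelling since $\sum_i p_i'(r)=0$), and the two vanishing eigenvalues make this diverge to $-\infty$. By contrast $\op{H}(\Phi_\eta(\rho_r))$ is smooth at $r=1$ with a finite derivative there, precisely because the smallest eigenvalue of $\Phi_\eta(\rho_1)$ equals $\eta/2 > 0$. Hence $f'(r)\to -\infty$ as $r\to1^-$, and since $f$ is continuous on $[0,1]$ with $f(1)=0$, we conclude $f(r) > 0$ for all $r$ in some interval $(r_0,1)$, which proves the theorem.

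The step I expect to be the main obstacle is this asymptotic analysis of $\op{H}(\Psi_\eta(\rho_r))$ near the pure input: one must read off from \eqref{eq:epolarizing} which eigenvalues vanish and at what rate, verify positivity of the leading coefficients for every $\eta\in(0,1]$, and control the bounded-eigenvalue contributions well enough to be certain the derivative genuinely diverges; a minor additional point is differentiability of $\op{H}(\Psi_\eta(\rho_r))$ at the eigenvalue coincidence occurring when $\eta=1$, which is harmless since the relevant symmetric functions of the eigenvalues remain smooth there. A quantitative variant — substituting $\delta = 1-r$ and showing $f(1-\delta)>0$ for an explicit small $\delta$ — also works, but the admissible range of $\delta$ shrinks extremely rapidly as $\eta\to 0$, so the limiting form of the argument is cleaner.
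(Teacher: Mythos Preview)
Your proposal is correct and follows the same underlying strategy as the paper: both use the diagonal family $\rho_r=\tfrac12(\I+r\sigma_3)$ (equivalently $\rho=\operatorname{diag}(1-\delta,\delta)$ with $r=1-2\delta$) and exploit the fact that, as the input approaches a pure state, two eigenvalues of $\Psi_\eta(\rho)$ vanish linearly while all eigenvalues of $\Phi_\eta(\rho)$ stay at least $\eta/2>0$, forcing $\op{H}(\Psi_\eta(\rho))-\op{H}(\Phi_\eta(\rho))>0$ nearby.

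The execution differs in one technical respect. The paper does not diagonalize $\Psi_\eta(\rho)$ directly; instead it writes $\Psi_\eta(\rho)=(1-\delta)\xi+\delta\,U\xi U^\ast$ for a unitary $U$, uses concavity to bound $\op{H}(\Psi_\eta(\rho))\geq\op{H}(\xi)$, and reads off the clean spectrum $\{1-\eta/2,\,0,\,\eta(1-\delta)/2,\,\eta\delta/2\}$ of $\xi$; combined with a mean-value bound on $\op{H}(\Phi_\eta(\rho))$ this yields the explicit inequality $\tfrac{\eta}{2}\op{H}_2(\delta)>(1-\eta)\delta\log(2/\eta)$ and hence an explicit admissible range for $\delta$. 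Your route instead computes the block eigenvalues of $\Psi_\eta(\rho_r)$ directly and argues via $f'(r)\to-\infty$ at $r=1$. Your version is arguably more conceptual (it isolates the eigenvalue-counting mechanism cleanly and would transfer to other channels with the same rank behavior), while the paper's concavity trick buys a shorter computation and an explicit, quantitative $\delta$-threshold rather than a bare existence statement.
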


\begin{proof}
  The coherent information is independent of the choice of the complementary
  channel, so it suffices to focus on the choice $\Psi_{\eta}$ described in
  \eqref{eq:epolarizing}.
  Taking
  \begin{equation}
   \label{eq:input}
    \rho = \begin{pmatrix}
      1 - \delta & 0\\
      0 & \delta
    \end{pmatrix}
  \end{equation}
  yields $\ip{\sigma_1}{\rho} = 0$, $\ip{\sigma_2}{\rho} = 0$, and
  $\ip{\sigma_3}{\rho} = 1 - 2\delta$, and therefore
  \begin{equation}
    \Psi_{\eta}(\rho) = 
    \begin{pmatrix}
      (1- \varepsilon) & 0 & 0 & 
      \sqrt{\frac{\varepsilon(1-\varepsilon)}{3}} (1 - 2\delta) \\[2mm]
      0 & \frac{\varepsilon}{3} & -\frac{i \varepsilon}{3} (1 - 2\delta) 
      & 0 \\[2mm]
      0 & \frac{i \varepsilon}{3} (1 - 2\delta) & \frac{\varepsilon}{3} & 
      0 \\[2mm]
      \sqrt{\frac{\varepsilon(1-\varepsilon)}{3}} (1 - 2\delta) & 0 & 0 & 
      \frac{\varepsilon}{3}
    \end{pmatrix}.
  \end{equation}
  A closed-form expression for the entropy of $\Psi_{\eta}(\rho)$ is not 
  difficult to obtain; however for our purpose it suffices to lower bound 
  $H(\Psi_{\eta}(\rho))$ with the following simple argument.  
  Define the state 
  \begin{equation}
    \xi = 
    \begin{pmatrix}
    (1-\varepsilon) & 0 & 0 & \sqrt{\frac{\varepsilon(1-\varepsilon)}{3}}\\[2mm]
     0 & \frac{\varepsilon}{3} & -\frac{i\varepsilon}{3} (1-2\delta) & 0 \\[2mm]
     0 & \frac{i\varepsilon}{3} (1-2\delta) & \frac{\varepsilon}{3} & 0 \\[2mm]
    \sqrt{\frac{\varepsilon(1-\varepsilon)}{3}} & 0 & 0 & \frac{\varepsilon}{3}
    \end{pmatrix}\,, 
  \end{equation}
  and note that
  \begin{equation}
    \Psi_{\eta}(\rho) = (1-\delta) \, \xi + \delta \, U \xi U^{\ast}
  \end{equation}
  where $U$ is diagonal with diagonal entries $(1,1,1,-1)$. 
  As the von~Neumann entropy is concave and invariant under unitary
  conjugations, it follows that $H(\Psi_{\eta}(\rho)) \geq H(\xi)$.  
  Finally, $\xi$ has eigenvalues
  \begin{equation}
    \biggl\{ 1-\frac{2\varepsilon}{3},0,\frac{2\varepsilon(1-\delta)}{3},
    \frac{2\varepsilon\delta}{3}\biggr\}
    = \biggl\{ 1-\frac{\eta}{2},0,\frac{\eta(1-\delta)}{2},
    \frac{\eta\delta}{2}\biggr\}
  \end{equation}
  and entropy   
  \begin{equation}
    \op{H}(\xi)  
    = \frac{\eta}{2} \op{H}_2(\delta) + \op{H}_2\Bigl(\frac{\eta}{2}\Bigr).
    \label{eq:eveentropy}
  \end{equation}
  
  On the other hand, 
  \begin{equation}
    \Phi_{\eta}(\rho) = \begin{pmatrix}
      (1 - \eta)(1 - \delta) + \frac{\eta}{2} & 0\\
      0 & (1 - \eta) \, \delta + \frac{\eta}{2} 
    \end{pmatrix},
  \end{equation}
  and therefore
  \begin{equation}
    \op{H}\bigl(\Phi_{\eta}(\rho)\bigr) 
    = \op{H}_2\Bigl((1 - \eta)\,\delta + \frac{\eta}{2}\Bigr).
 \label{eq:bobentropy}
 \end{equation}
  By the mean value theorem, one has
  \begin{equation}
    \op{H}_2\Bigl((1 - \eta)\,\delta + \frac{\eta}{2}\Bigr)
    - \op{H}_2\Bigl(\frac{\eta}{2}\Bigr)
    = (1 - \eta)\,\delta \, \bigl(\log(1-\mu) -\log(\mu) \bigr)
  \end{equation}
  for some choice of $\mu$ satisfying
  $\eta/2 \leq \mu \leq (1 - \eta)\,\delta + \eta/2$, and therefore
  \begin{equation}
    \op{H}\bigl(\Phi_{\eta}(\rho)\bigr)
    \leq \op{H}_2\Bigl(\frac{\eta}{2}\Bigr)
    + (1 - \eta)\,\delta \log\Bigl(\frac{2}{\eta}\Bigr).
  \end{equation}
  Therefore, the coherent information of $\rho$ through
  $\Psi_{\eta}$ is lower-bounded as follows:
  \begin{equation}
    \begin{multlined}
      \op{I}_{\textup{\tiny C}}(\rho;\Psi_\eta)
      = \op{H}(\Psi_\eta(\rho)) - \op{H}(\Phi_\eta(\rho))\\
      \geq \op{H}(\xi) - \op{H}(\Phi_\eta(\rho))
      \geq \frac{\eta}{2}\op{H}_2(\delta) - 
      (1{-}\eta)\,\delta\log\Bigl(\frac{2}{\eta}\Bigr).
    \end{multlined}
  \end{equation}
  We solve the inequality where the rightmost expression is strictly positive.
  The values of $\delta$ for which strict positivity holds includes the
  interval
  \begin{equation}
    0 < \delta \leq 2^{-\frac{2(1 - \eta)}{\eta}\log\bigl(\frac{2}{\eta}\bigr)},
  \end{equation}
  which completes the proof.  
\end{proof}

Note that one can obtain a closed-form expression of
$\op{I}_{\textup{\tiny C}}(\rho;\Psi_\eta)$ for $\rho$ given by
\eqref{eq:input}.  Furthermore, this input is optimal due to
the symmetry of $\Psi_\eta$.  Therefore, the actual coherent
information of $\Psi_\eta$ can be obtained by optimizing
$\op{I}_{\textup{\tiny C}}(\rho;\Psi_\eta)$ over $\delta$.  This
method does not extend to the calculation of the $n$-shot coherent
information, nor the asymptotic quantum capacity of $\Psi_\eta$.

\section{Comparisons with some well-known families of channels} 

The \emph{qubit erasure channel} with noise parameter $\eta \in [0,1]$,
denoted by $\Xi_\eta$, takes a single qubit state $\rho \in \cD(\mathbb{C}^2)$
to itself with probability $1-\eta$, and replaces it by an error symbol
orthogonal to the input space with probability $\eta$.
The quantum capacity of the erasure channel is known and is given by
$\op{Q}(\Xi_\eta) = \max(0,1-2\eta)$ \cite{bennett1997-erasure}.

We can relate the depolarizing channel, the erasure channel, and the
epolarizing channel as follows.
Let each of $\cA, \cS_1, \cS_2, \cG_1, \cG_2$ denote a qubit system.
Consider an isometry 
\begin{equation}
  A \in \cL(\cA, \cS_1 \otimes \cS_2 \otimes \cG_1
  \otimes \cG_2 \otimes \cA)
\end{equation}
acting on a pure qubit state $\ket{\psi} \in \cA$ as
\begin{equation}
  \ket{\psi}_{\cA} \mapsto \left[
    \ket{0}\bra{0}_{\cS_1} \otimes \I_{\cA \cG_1} +  
    \ket{1}\bra{1}_{\cS_1} \otimes \textsc{swap}_{\cA \cG_1} 
    \right]
  \ket{s}_{\cS_1 \cS_2} \ket{\Phi}_{\cG_1 \cG_2} \ket{\psi}_{\cA},
\end{equation}
where $\ket{s} = \sqrt{1-\eta} \, \ket{00} + \sqrt{\eta} \, \ket{11}$ and
$\ket{\Phi} = \frac{1}{\sqrt{2}} ( \ket{00} + \ket{11})$, and where
the subscripts denote the pertinent systems.
The isometry can be interpreted as follows.
System $\cA$ (the input space) initially contains the input state
$\ket{\psi}_{\cA}$, while a system $\cG_1$ (which represents a ``garbage''
space) is initialized to a completely mixed state.
The input is swapped with the garbage if and only if a measurement of the
$\cS_1$ system (which represents a ``syndrome'') causes the state $\ket{s}$
of $\cS_1\cS_2$ to collapse to $\ket{11}$.
Finally, each of the depolarizing, erasure, and the epolarizing channel can be 
generated by discarding a subset of the systems as follows:
\begin{equation}
  \label{eq:all3}
  \begin{aligned}
    \Phi_\eta(\rho) & = \tr_{\cS_1 \otimes \cS_2 \otimes \cG_1
      \otimes \cG_2} (A \rho A^{\ast})\,,\\
    \Xi'_\eta(\rho) & = \tr_{\cS_2 \otimes \cG_1 \otimes \cG_2}
    (A \rho A^{\ast})\,,\\
    \Psi'_\eta(\rho) & = \tr_{\cA} (A \rho A^{\ast}) \,.
  \end{aligned}
\end{equation} 
To be more precise, the channel $\Xi'_\eta$ in \eqref{eq:all3} is related
to the channel $\Xi_\eta$ described earlier by an isometry---for all
relevant purposes, $\Xi'_\eta$ and $\Xi_\eta$ are equivalent.
Likewise, $\Psi'_\eta$ is equivalent to $\Psi_\eta$ in \eqref{eq:epolarizing}.
If we ignore the precise value of $\eta$, the systems $\cA$ and
$\cG_1$ carry qualitatively similar information.  Furthermore, the
additional garbage system $\cG_2$ is irrelevant. 
So, the three families of channels are distinguished by which syndrome
systems are available in the output: none for the depolarizing channel output,
both for the epolarizing channel, and one for the erasure channel.
These different possibilities cause significant differences in the noise
parameter ranges for which the quantum capacity vanishes
\cite{bennett1997-erasure,DiVincenzoSS98}:
\begin{equation}
  \begin{aligned}
    \op{Q}(\Phi_\eta) = 0 & \quad \text{if} \;\; 1/3 \leq \eta \leq 1 \,,\\
    \op{Q}(\Xi_\eta) = 0 & \quad \text{iff} \;\; 1/2 \leq \eta \leq 1 \,,\\ 
    \op{Q}(\Psi_\eta) = 0 & \quad \rm{iff} \;\; \eta = 0 \,.
  \end{aligned}
\end{equation}
In particular, when $\eta \approx 0$, the syndrome state carries very
little information and only interacts weakly with the input---and yet
having all shares of it in the output keeps the quantum capacity of
the epolarizing channel positive.
The syndrome systems therefore carry qualitatively significant information 
that is quantitatively negligible.
Despite recent results in \cite{LeditzkyLS17}, the extent to which 
this phenomenon is relevant to an understanding of the
capacity of the depolarizing channel is a topic for further research.

We also note that the qubit amplitude damping channel (see
\cite{NC00}) has vanishing quantum capacity if and only if the noise
parameter satisfies $1/2 \leq \eta \leq 1$, which is similar to the erasure
channel (while the output only approaches a constant as $\eta
\rightarrow 1$).  
The dephasing channel (see below) does not take the
input to a constant for all noise parameters.  

\section{Extension to other channels}

A mixed Pauli channel on one qubit can be described by a Kraus representation
\begin{equation}
  \Theta(\rho) 
  = (1-p_1-p_2-p_3) \; \rho + p_1 \, \sigma_1 \, \rho \, \sigma_1
  + p_2 \, \sigma_2 \, \rho \, \sigma_2
  + p_3 \, \sigma_3 \, \rho \, \sigma_3 \,,
\end{equation}
for $p_1,p_2,p_3\geq 0$ satisfying $p_1 + p_2 + p_3\leq 1$.
For example, a \emph{dephasing channel} can be described in this way
by taking $p_1 = p_2 = 0$ and $p_3 \in [0,1]$.
In this case the quantum capacity is known to equal $1-H_2(p_3)$,
which is positive except when $p_3 = 1/2$.
Any complementary channel of such a dephasing channel must have zero quantum
capacity.
If at least $3$ of the $4$ probabilities $(1-p_1-p_2-p_3),p_1, p_2, p_3$ are 
positive, a generalization of our main result demonstrates that the capacity of
a complementary channel of $\Theta$ has positive coherent information, as is
proved below, so the phenomenon exhibited by the depolarizing channel is
therefore not an isolated instance.
It is an interesting open problem to determine which mixed unitary channels
in higher dimensions, meaning those channels having a Kraus representation in
which every Kraus operator is a positive scalar multiple of a unitary operator,
have complementary channels with positive capacity.
(It follows from the work of \cite{CubittRS08} that every mixed unitary channel
with commuting Kraus operators is degradable, and therefore must have zero
complementary capacity.)

\begin{theorem}
  \label{theorem:mixed-pauli}
  Consider the mixed Pauli channel on one qubit described by
  \begin{equation}
    \Theta(\rho) = p_0 \; \rho + p_1 \, \sigma_1 \, \rho \, \sigma_1
    + p_2 \, \sigma_2 \, \rho \, \sigma_2
    + p_3 \, \sigma_3 \, \rho \, \sigma_3 \,,
  \end{equation}
  where $p_0,p_1,p_2,p_3\geq 0$, $p_0+p_1+p_2+p_3=1$.
  If three or more of these probabilities are nonzero, then any complementary
  channel to $\Theta$ has positive coherent information.  
\end{theorem}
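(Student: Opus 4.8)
The plan is to replay the proof of Theorem~\ref{theorem:positive-complementary-coherent-information}, replacing the probability vector $(1-\varepsilon,\varepsilon/3,\varepsilon/3,\varepsilon/3)$ by a general vector $(p_0,p_1,p_2,p_3)$ and keeping track of the extra parameters. First I would fix the natural Stinespring isometry
\begin{equation}
  A = \sqrt{p_0}\,\I\otimes\ket{0} + \sqrt{p_1}\,\sigma_1\otimes\ket{1}
  + \sqrt{p_2}\,\sigma_2\otimes\ket{2} + \sqrt{p_3}\,\sigma_3\otimes\ket{3},
\end{equation}
whose associated complementary channel $\Psi$ sends $\rho$ to the operator acting on $\cE=\mathbb{C}^4$ with $\bra{j}\Psi(\rho)\ket{k} = \sqrt{p_j p_k}\,\tr(\sigma_j\rho\sigma_k)$ for $j,k\in\{0,1,2,3\}$, where $\sigma_0=\I$. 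As in the depolarizing case, $A$ is also an isometric extension of $\Psi$ (with the roles of the output and environment spaces exchanged), so $\Theta$ is itself a complementary channel of $\Psi$; since the coherent information does not depend on the choice of complementary channel, $\op{I}_{\textup{\tiny C}}(\rho;\Psi) = \op{H}(\Psi(\rho)) - \op{H}(\Theta(\rho))$.

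Next I would normalize $\Theta$. Conjugating its input and output by a single-qubit unitary that permutes $\{\sigma_1,\sigma_2,\sigma_3\}$ does not change the coherent information, so we may assume $p_1\ge p_2\ge p_3$. Under this convention the hypothesis that at least three of the probabilities are nonzero forces $p_1>0$ and $p_2>0$, and also $0<p_0+p_3<1$ (the upper inequality because $p_1+p_2>0$, the lower one because otherwise only $p_1$ and $p_2$ would be nonzero). I then take exactly the input $\rho=(1-\delta)\ket{0}\bra{0}+\delta\ket{1}\bra{1}$ used in Theorem~\ref{theorem:positive-complementary-coherent-information}, which gives $\ip{\sigma_1}{\rho}=\ip{\sigma_2}{\rho}=0$ and $\ip{\sigma_3}{\rho}=1-2\delta$. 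Consequently, after reordering the standard basis of $\cE$, the operator $\Psi(\rho)$ is block diagonal with a $2\times 2$ block on the indices $\{0,3\}$ and a $2\times 2$ block on $\{1,2\}$, the latter having trace $p_1+p_2$ and determinant $4p_1p_2\,\delta(1-\delta)$.

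Now I would reuse the concavity trick. Let $\xi$ be obtained from $\Psi(\rho)$ by replacing the scalar $1-2\delta$ in the $\{0,3\}$ block with $1$ and leaving the $\{1,2\}$ block intact; then $\Psi(\rho)=(1-\delta)\,\xi+\delta\,W\xi W^{\ast}$ for the diagonal unitary $W$ negating $\ket{3}$, so $\op{H}(\Psi(\rho))\ge\op{H}(\xi)$ by concavity and unitary invariance of the von~Neumann entropy. The $\{0,3\}$ block of $\xi$ is rank one with nonzero eigenvalue $p_0+p_3$, while its $\{1,2\}$ block has trace $p_1+p_2$ and determinant $4p_1p_2\,\delta(1-\delta)$, so a short computation gives
\begin{equation}
  \op{H}(\xi) = \op{H}_2(p_0+p_3) + (p_1+p_2)\,\op{H}_2(\nu),
  \qquad
  \nu = \frac{1}{2}\left(1 - \sqrt{1 - \frac{16\,p_1 p_2\,\delta(1-\delta)}{(p_1+p_2)^2}}\,\right).
\end{equation}
On the other hand $\sigma_3\rho\sigma_3=\rho$ and $\sigma_1\rho\sigma_1=\sigma_2\rho\sigma_2=\delta\ket{0}\bra{0}+(1-\delta)\ket{1}\bra{1}$, whence $\op{H}(\Theta(\rho))=\op{H}_2\bigl((p_0+p_3)(1-\delta)+(p_1+p_2)\delta\bigr)$, which equals $\op{H}_2(p_0+p_3)$ at $\delta=0$. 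Just as in Theorem~\ref{theorem:positive-complementary-coherent-information}, the mean value theorem yields $\op{H}(\Theta(\rho))\le\op{H}_2(p_0+p_3)+c_1\delta$ for small $\delta$, with $c_1$ finite precisely because $p_0+p_3\in(0,1)$. Assembling the estimates gives $\op{I}_{\textup{\tiny C}}(\rho;\Psi)\ge(p_1+p_2)\op{H}_2(\nu)-c_1\delta$; since $p_1,p_2>0$ the quantity $\nu$ is of order $\delta$, so $(p_1+p_2)\op{H}_2(\nu)$ is of order $\delta\log(1/\delta)$ and dominates $c_1\delta$. Hence $\op{I}_{\textup{\tiny C}}(\rho;\Psi)>0$ for all sufficiently small $\delta>0$, and one can extract an explicit interval of such $\delta$ exactly as in the first proof.

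The step I expect to be the main obstacle is the normalization in the second paragraph: one must see that ``at least three nonzero probabilities'' is precisely what makes the $\{1,2\}$ block of $\Psi(\rho)$ nondegenerate---nonzero off-diagonal and determinant of order $\delta(1-\delta)$, which is the source of the $\delta\log(1/\delta)$ gain in $\op{H}(\xi)$---while at the same time keeping $p_0+p_3$ strictly between $0$ and $1$, so that $\op{H}(\Theta(\rho))$ has bounded slope at $\delta=0$. When only two probabilities are nonzero one of these two features necessarily fails, consistent with such channels being degradable and therefore having zero complementary capacity; this is why the hypothesis cannot be dropped.
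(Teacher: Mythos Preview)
Your proposal is correct and follows essentially the same approach as the paper's proof: same isometric extension, same diagonal input $\rho$, same concavity trick with the unitary flipping the sign of $\ket{3}$, and the same asymptotic comparison of a $\delta\log(1/\delta)$ term against a linear term. The only difference is cosmetic: the paper normalizes to $p_0\ge p_1\ge p_2\ge p_3$ and then reparametrizes via $\eta'=2(p_1+p_2)$ and an angle $\theta$ with $\sin^2(2\theta)=4p_1p_2/(p_1+p_2)^2$, defining $\delta'$ by $\delta'(1-\delta')=\delta(1-\delta)\sin^2(2\theta)$ so that the spectrum of $\xi'$ looks exactly like that of $\xi$ in Theorem~\ref{theorem:positive-complementary-coherent-information}; your $\nu$ is precisely their $\delta'$, and your $\op{H}_2(p_0+p_3)$ is their $\op{H}_2(\eta'/2)$.
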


\begin{proof}[Proof of Theorem~\ref{theorem:mixed-pauli}]
The proof is similar to that of 
Theorem~\ref{theorem:positive-complementary-coherent-information}. 
We can assume without loss of generality that $p_0 \geq p_1 \geq p_2
\geq p_3$, by redefining the basis of the output space if necessary.  
A convenient choice of the isometric extension is 
\begin{equation}
  A = \sum_{i=0}^3 \sqrt{p_i} \, \sigma_i \otimes \ket{i} \,,
\end{equation}
where $\sigma_0 = \I$.
This gives a complementary channel $\Theta^c$ acting as 
\begin{equation}
  \Theta^c(\rho) = 
  \begin{pmatrix}
    p_0\!\! & 
    \sqrt{p_0 p_1} \, \ip{\sigma_1}{\rho}\!\! & 
    \sqrt{p_0 p_2} \, \ip{\sigma_2}{\rho}\!\! & 
    \sqrt{p_0 p_3} \, \ip{\sigma_3}{\rho}\!\! \\[2mm]
    \sqrt{p_0 p_1} \, \ip{\sigma_1}{\rho} & 
    p_1\!\! &
    -{i} \sqrt{p_1 p_2} \, \ip{\sigma_3}{\rho}\!\! &
      i \sqrt{p_1 p_3} \, \ip{\sigma_2}{\rho} \\[2mm]
    \sqrt{p_0 p_2} \, \ip{\sigma_2}{\rho}\!\! & 
    i \sqrt{p_1 p_2} \, \ip{\sigma_3}{\rho}\!\! &
    p_2\!\! &
    -{i} \sqrt{p_2 p_3} \, \ip{\sigma_1}{\rho}\! \\[2mm]
    \sqrt{p_0 p_3} \, \ip{\sigma_3}{\rho}\!\! & 
    -{i} \sqrt{p_1 p_3} \, \ip{\sigma_2}{\rho}\!\! &
    i \sqrt{p_2 p_3} \, \ip{\sigma_1}{\rho}\!\! &
    p_3\!\!
  \end{pmatrix}.
\end{equation}

We choose the following parametrization to simplify the analysis.
Let $p_1 = p > 0$, $p_2 = \alpha p$ where $0 < \alpha \leq 1$, and 
$\eta' = 2(1+\alpha)p$.
We will see that the parameter $\eta'$ enters the current proof in a way
that is similar to the noise parameter $\eta$ for the depolarizing channel
in the proof of 
Theorem~\ref{theorem:positive-complementary-coherent-information}.
Once again, we take 
\begin{equation}
  \rho = \begin{pmatrix}
    1 - \delta & 0\\
    0 & \delta
  \end{pmatrix}
\end{equation}
so $\ip{\sigma_1}{\rho} = 0$, $\ip{\sigma_2}{\rho} = 0$, and
$\ip{\sigma_3}{\rho} = 1 - 2\delta$, and therefore
\begin{equation}
  \Theta^c(\rho) = 
  \begin{pmatrix}
    p_0 & 0 & 0 & 
    \sqrt{p_0 p_3} (1 - 2\delta) \\[2mm]
    0 & p_1 & -{i} \sqrt{p_1 p_2} (1 - 2\delta) & 0 \\[2mm]
    0 & i \sqrt{p_1 p_2} (1 - 2\delta) & p_2 & 0 \\[2mm]
    \sqrt{p_0 p_3} (1 - 2\delta) & 0 & 0 & p_3
  \end{pmatrix}.
\end{equation}
The entropy of $\Theta^c(\rho)$ is at least the entropy of the state 
\begin{equation}
  \xi' = 
  \begin{pmatrix}
    p_0 & 0 & 0 & 
    \sqrt{p_0 p_3} \\[2mm]
    0 & p_1 & -i\sqrt{p_1 p_2} (1 - 2\delta) & 0 \\[2mm]
    0 & i\sqrt{p_1 p_2} (1 - 2\delta) & p_2 & 0 \\[2mm]
    \sqrt{p_0 p_3} & 0 & 0 & p_3
  \end{pmatrix}.
\end{equation}
The submatrix at the four corners gives rise to the eigenvalues
$\{p_0 + p_3,0\} = \{1-\frac{\eta'}{2},0\}$ as in the proof of
Theorem~\ref{theorem:positive-complementary-coherent-information}.
Meanwhile, the middle block can be rewritten as 
\begin{equation}
  \frac{\eta'}{2} \begin{pmatrix}
    \frac{1}{1+\alpha} & \frac{i\sqrt{\alpha}}{1+\alpha}(1-2\delta) \\[2mm]
    \frac{-i\sqrt{\alpha}}{1+\alpha}(1-2\delta) & \frac{\alpha}{1+\alpha}
  \end{pmatrix} 
  = 
  \frac{\eta'}{2} \begin{pmatrix}
    \frac{1}{2} + \frac{\cos (2\theta)}{2} & 
    \frac{i \sin (2 \theta)}{2} (1-2\delta) \\[2mm]
    \frac{-i \sin (2 \theta)}{2} (1-2\delta) & 
    \frac{1}{2} - \frac{\cos (2\theta)}{2} 
  \end{pmatrix} \, ,
  \label{eq:midblock}
\end{equation}
where 
\begin{align}
  \frac{1}{1+\alpha} = \cos^2(\theta)
  = \frac{1}{2} + \frac{\cos(2\theta)}{2}\,,\\
  \frac{\alpha}{1+\alpha} = \sin^2 (\theta) = \frac{1}{2} - 
  \frac{\cos (2\theta)}{2} \,,\\
  \frac{\sqrt{\alpha}}{1+\alpha} = \sin(\theta)\cos(\theta)
  = \frac{\sin (2 \theta)}{2} \,,
\end{align}
and $0< \theta \leq \frac{\pi}{2}$.
From equation \eqref{eq:midblock}, the eigenvalues of the middle block can be
evaluated as
\begin{equation}
  \frac{\eta'}{2}\biggl\{\frac{1+r}{2},\frac{1-r}{2}\biggr\}
\end{equation}
where 
\begin{equation} r^2 = \cos^2 (2 \theta) + (1-2 \delta)^2 \sin^2(2 \theta)
  = 1-4\delta \sin^2 (2 \theta) + 4 \delta^2 \sin^2 (2 \theta) \,.
\end{equation}
If we define the variable $\delta'$ to satisfy the equation
\begin{equation}
  \delta (1-\delta) \sin^2 (2 \theta) = \delta' (1-\delta'),
\end{equation}
then $r = 1-2\delta'$ and the two eigenvalues are
\begin{equation}
  \biggl\{\frac{\eta'(1-\delta')}{2},\frac{\eta'\delta'}{2}\biggr\}.
\end{equation}

Altogether, the spectrum of $\xi'$ is
\begin{equation}
\biggl\{ 1-\frac{\eta'}{2},0,\frac{\eta'(1-\delta')}{2},
\frac{\eta'\delta'}{2}\biggr\},
\end{equation}
which has the same form as the spectrum of $\xi$ in the proof of 
Theorem~\ref{theorem:positive-complementary-coherent-information}, 
and the entropy of $\xi'$ is analogous to \eqref{eq:eveentropy}, 
\begin{equation}
  \op{H}(\xi')  
  = \frac{\eta'}{2} \op{H}_2(\delta') + \op{H}_2\biggl(\frac{\eta'}{2}\biggr).
\end{equation}
On the other hand, $\Theta(\rho)$ has exactly the same expression as
$\Phi_{\eta'}(\rho)$ and the entropy of $\Theta(\rho)$ is analogous to
\eqref{eq:bobentropy},
\begin{equation}
  \op{H}\bigl(\Theta(\rho)) 
  = \op{H}_2\biggl((1 - \eta')\,\delta + \frac{\eta'}{2}\biggr).
\end{equation}
Following arguments similar to the proof of
Theorem~\ref{theorem:positive-complementary-coherent-information},
the coherent information of $\rho$ through
  $\Theta^c$ is lower-bounded as follows:
  \begin{equation}
    \op{I}_{\textup{\tiny C}}(\rho;\Theta^c)
    = \op{H}(\Theta^c(\rho)) - \op{H}(\Theta(\rho))
    \; \geq \;
    \op{H}(\xi') - \op{H}(\Theta(\rho))
    \; \geq \;  
    \frac{\eta'}{2}\op{H}_2(\delta') - 
    (1{-}\eta')\,\delta\log\Bigl(\frac{2}{\eta'}\Bigr).
  \end{equation}
We have a $\delta'$-dependency in the first term and
$\delta$-dependency in the second term.
However,
\begin{equation}
  \delta (1-\delta) \sin^2 (2 \theta) = \delta' (1-\delta'),
\end{equation}
and $\sin^2 (2 \theta)$ is a positive constant determined by $\alpha = p_2/p_1$,
so for sufficiently small $\delta$, the above equation is strictly positive.
\end{proof}

\section{Conclusion}

We have shown that any complementary channel to the qubit depolarizing channel
has positive quantum capacity unless its output is exactly constant.  
This gives an example of a family of channels whose outputs approach a
constant, yet retain positive quantum capacity.
We also point out a crucial difference between the epolarizing channel and the
related depolarizing and erasure channels.
We hope these observations will shed light on what may or may not cause the
quantum capacity of a channel to vanish.

Our work also rules out the possibility that Watanabe's results
\cite{Watanabe2012} can be applied directly to show that the low-noise
depolarizing channel has quantum capacity given by the $1$-shot coherent
information.
Very recently, \cite{LeditzkyLS17} established tight upper bounds on the
difference between the one-shot coherent information and the quantum and 
private capacities of a quantum channel.
While our results do not have direct implications to these capacities 
of $\Phi_{\eta}$, we hope they provide insights for further investigations 
beyond the bounds established in \cite{LeditzkyLS17}.  

\subsection*{Acknowledgements}

We thank Ke Li, Graeme Smith, and John Smolin for inspiring discussions on the
depolarizing channel, and we thank the hospitality of the Physics of
Information Group at IBM TJ Watson Research Center.
We also thank Frederic Dupuis, Aram Harrow, William Matthews, Graeme Smith,
Mark Wilde, and Andreas Winter for a lively discussion concerning the
epolarizing channel during the workshop \emph{Beyond IID in Information
Theory}, 5--10 July 2015, and the hospitality of The Banff International
Research Station (BIRS).
Finally, we thank Yuan Su for bringing an error in a previous version of this
paper to our attention.
This research was supported by NSERC, the Canadian Institute for
Advanced Research, and the Canada Research Chairs Program.

\bibliographystyle{plainnat}

\end{document}